\renewcommand{\em}{\it}
\def\fee{\upphi}
\newcommand{\RU}{R}
\newcommand{\RUinv}{\RU^{-1}}
\newcommand{\Ybar}{\bar{X}}
\newcommand{\etabar}{{\eta}}
\newcommand{\ud}{\mathrm{d}}
\def\Re{\mathbb{R}}
\def\R{\mathbb{R}}
\def\argmin{\mathop{\text{\rm arg\,min}}}
\def\Sec#1{Sec.~\ref{#1}}
\newcommand{\tr}{\mathrm{tr}}
\def\transpose{{\hbox{\rm\tiny T}}}
\newcommand{\cov}{\text{Cov}}
\newcommand{\backward}[1]{\overset{\shortleftarrow}{#1}}
\newcommand{\Ybarbar}{{Y}}
\newcommand{\SN}{S^{(N)}}
\def\Sec#1{Sec.~\ref{#1}}
\def\Prop#1{Prop.~\ref{#1}}
\newcounter{rmnum}
\newcounter{anum}
\def\FRAC#1#2#3{\genfrac{}{}{}{#1}{#2}{#3}}
\def\half{{\mathchoice{\FRAC{1}{1}{2}}%
		{\FRAC{2}{1}{2}}%
		{\FRAC{3}{1}{2}}%
		{\FRAC{4}{1}{2}}}}
\newcommand{\inv}{^{-1}}
\newcommand{\tp}{^\transpose}
\newcommand{\normal}{\mathcal{N}}
\newcommand{\vf}{\mathcal{I}}
\newcommand{\correct}{\mathcal{C}}
\newcommand{\grad}{\nabla}
\newcommand{\pbart}{\bar{p}_t}
\newcommand{\pt}{p_t}
\newcommand{\bij}{\psi}
\newcommand{\vfc}{\mathcal{V}}
\newcommand{\vftr}{\omega}
\newcommand{\hjb}{\mathcal{H}}
\title{\LARGE \bf
Dual Ensemble Kalman Filter for Stochastic Optimal Control
}
\author{Anant A. Joshi, Amirhossein Taghvaei, Prashant G. Mehta, and Sean
  P. Meyn
\thanks{This work is supported in part by the AFOSR award
		FA9550-23-1-0060 (Joshi, Mehta), by the National Science Foundation (NSF) award EPCN-2318977  (Taghvaei) and by CCF 2306023 (Meyn).
	}
	\thanks{A.~A.~Joshi and P.~G.~Mehta are with the Coordinated
	Science Laboratory and the Department of Mechanical Science and
	Engineering at the University of Illinois at Urbana-Champaign
	(e-mail: anantaj2; mehtapg@illinois.edu).  A.~Taghvaei is with the
        Department of Aeronautics and Astronautics at the University of
        Washington Seattle (email: amirtag@uw.edu).
        S.~P.~Meyn is with
        the Department of Electrical and Computer Engineering at the
        University of Florida at Gainesville (email: meyn@ece.ufl.edu).}
}
\def\Ybar{{X}}
\def\transpose{{\intercal}}
\def\fee{\phi}
\begin{document}

\maketitle
\thispagestyle{empty}
\pagestyle{empty}

\begin{abstract}

In this paper, stochastic optimal control problems in
continuous time and space are considered.  
In recent years, such problems have
received renewed attention from the lens of reinforcement learning
(RL) which is also one of our motivation.   The main contribution is a
simulation-based algorithm -- dual ensemble Kalman
filter (EnKF) -- to numerically approximate the solution of
these problems.    
The paper extends our previous work where the dual EnKF was applied
in deterministic settings of the problem. 
The theoretical results and algorithms are illustrated with numerical experiments.

\end{abstract}

\section{Introduction}

Many types of reinforcement learning (RL) algorithms may be viewed as
``simulation-based'' where a model of a control system is simulated to
evaluate and iteratively improve a policy.  In
continuous-type continuous-space settings of this paper, the optimal
policy may be obtained from solving the Hamilton-Jacobi-Bellman (HJB)
equation for the value function (in linear Gaussian settings, the
equation reduces to a differential Riccati equation (DRE)).  A
simulation-based algorithm is useful for approximating the solution of
the HJB or the DRE for the cases where the state-space is too
large or the model parameters are not explicitly available (even
though a simulator for the same is).


In this paper, we consider optimal control problems where the the
control system is an It\^o stochastic differential equation (SDE) as
follows:
\begin{subequations}\label{eq:dyn}
\begin{align}
\ud X_t &= (a(X_{t}) + b(X_t) U_t)\ud t + \sigma(X_t) \ud W_t,\\
X_0&=x
\end{align}
\end{subequations}
\noindent where $X:=\{X_t:0\leq t\leq T\}$ is the $\Re^d$-valued state
process, $U:=\{U_t:0\leq t\leq T\}$ is the $\Re^m$-valued
control input, $W:=\{W_t:t\geq 0\leq t\leq T\}$ is a standard
Brownian motion (B.M.), and 
$a(\cdot),b(\cdot),\sigma(\cdot)$ are twice continuously
differentiable functions of their arguments.  The model is said to be
{\em deterministic} if $\sigma(x) = 0$ for all $x\in\Re^d$.  
The model is said to be linear Gaussian if $a(x) = Ax$, $b(x) = B$, and $\sigma(x) =
\sigma$.  

Our objective in this paper is to design a simulation-based algorithm
to approximate (or learn) the optimal control law.  Two types of
control objectives are considered for this purpose: (i) stochastic
optimal control (SOC); and (ii) risk sensitive control (RSC), both
over a finite-time horizon $[0,T]$.  The infinite-horizon case is
obtained by letting $T\to\infty$.  

The help explain the main idea of this paper, consider the SOC
problem.  Let $v_t(\cdot)$ denote the optimal value function for this
problem (for the linear Gaussian model with quadratic cost (LQG), the value function
$v_t(x) = x^\transpose P_t x + g_t$ is quadratic).  
Instead of
computing $v_t(x)$ by solving the HJB (or the DRE) equation,
our perspective is to view $\exp(- v_t(x))$ as an un-normalized
form of a probability density, denoted as $p_t(x)$. That is, 
\[
p_t(x) =
\frac{\exp(-v_t(x))}{\int_{\Re^d} \exp(-v_t(z)) \ud z}, \quad
x\in\Re^d, \; 0\leq t\leq T
\]
(For LQG, assuming $P_t$ is
invertible, $p_t = {\cal N}(0,
P_t^{-1})$ is a Gaussian density).  

\subsection{Contribution of this paper}

Our aim is to approximate the density $p_t(\cdot)$ as an ensemble
$\{Y_t^i\in\Re^d:   0\leq t\leq T, 1\leq i\leq N\}$ such that 
\begin{equation}\label{eq:yt_sample_from_p}
Y_t^i \stackrel{\text{i.i.d.}}{\sim} p_t(\cdot),\quad i=1,2,\hdots,N,
\quad 0\leq t\leq T
\end{equation}
(For LQG, $Y_t^i \stackrel{\text{i.i.d.}}{\sim} {\cal N}(0,
P_t^{-1})$ for $0 \le t \le T$).

The proposed simulation-based algorithm is a backward-in-time
controlled interacting particle system:
\begin{subequations}\label{eq:CIPS_intro}
\begin{align}
\ud Y^i_t &=  \underbrace{a(Y_t^i) \ud t + b\ud \backward{\eta}_t^i +
            \sigma(Y_t^i) \ud \backward{W}_t^i}_{i-\text{th copy of
    model}~\eqref{eq:dyn}} +  \underbrace{{\cal A}_t(Y_t^i; p_t^{(N)})
  \ud t}_{\text{mean field term}}\\
Y_T^i &\stackrel{\text{i.i.d.}}{\sim} \frac{\exp(-\mathcal{G}(\cdot))}{\int_{\R^d}\exp(-\mathcal{G}(z))\ud z},\quad i=1,2,\hdots,N,
\end{align}
\end{subequations}
where $\mathcal{G}$ is the terminal condition for the value function (for LQG it is a quadratic function of the state), and $p_t^{(N)}$ is the
empirical distribution of the ensemble.  The design problem is to design $\eta:=
\{(\eta_t^i, :0\leq
t\leq T\}$ and ${\cal A}:= \{{\cal A}_t:0\leq t\leq T\}$ such
that~\eqref{eq:yt_sample_from_p} holds in the mean-field limit
($N=\infty$). 

The first such algorithm appears in our prior
work~\cite{anant-2022} 
where explicit forms of ${\eta}$ and ${\cal
  A}$ are described for the deterministic setting of
the problem.  The
resulting algorithm is referred to as the dual ensemble Kalman filter (EnKF).  
The contribution of this paper is to extend the dual EnKF to
stochastic setting for the SOC and RSC problems. The specific types of cost structures
are introduced in~\Sec{sec:probform}.



\subsection{Relationship to literature}

The idea of transforming an optimal control problem into a sampling
problem is not new.  For the SOC problem, the idea has its roots in
the log transform which appears in~\cite{fleming-1982} and is related to the
minimum energy duality which is even
older~\cite{hijab-thesis,mortensen-1968}.  These types of
transformations are routinely re-discovered and have been applied for solving sampling/inference problems as optimal control
problems and vice-versa
(see~\cite{todorov-2007,kappen-2005,kappen-2005-jsm,rawlik2013stochastic,toussaint-2009,hoffman-2017}
and~\cite{levine-2018} for a review).  In the control
community, the moving horizon estimator is an example of the minimum
energy duality~\cite[Ch.~4]{rawlings-mayne}.

What is perhaps new in~\cite{anant-2022}  are two aspects of~\eqref{eq:CIPS_intro}:
\begin{enumerate}
\item Design of ${\eta}_t^i$ as a B.M.  This may be regarded as the
  exploration signal in RL.
\item Design of ${\cal A}_t(\cdot;\cdot)$.  This is the idea of
  designing interactions between simulations to approximate the
  solution of the HJB equation.
\end{enumerate}
In numerical performance and theoretical guarantees, these algorithms
can be order of magnitude better than the competing approaches
(detailed comparisons can be found in~\cite[Table 1]{anant-2022}).
There are some recent papers
from other groups~\cite{maoutsa-2022,reich-2024} which also use interacting particle systems to solve stochastic optimal control problems, a detailed comparison to which is presented in the main body of the paper.

It is noted that while the use of EnKF and controlled
interacting particle systems appears to be new for optimal control and
RL, EnKF is a standard simulation-based algorithm in the filtering
(data assimilation) applications~\cite{evensen2006}.

\subsection{Organisation of paper}

The outline of the remainder of this paper is as
follows. The mathematical formulations for the SOC and RSC problems
appear in~\Sec{sec:probform}.  The simulation-based algorithm to
approximate its solution appears in \Sec{sec:main-results}.  Each of
these sections include a discussion of historical and recent
literature on these topics.  The algorithms are numerically
illustrated in \Sec{sec:sim}. 


\section{Problem Formulation}
\label{sec:probform}

\textbf{Notation:} Given a symmetric positive semi-definite matrix $Q$, we let $|z|_{Q} \coloneqq z\tp Q z$. The normal distribution is represented as $\normal(\cdot,\cdot)$ where the first argument is the mean and second is the covariance. We let $\cov(\cdot)$ represent covariance.


\subsection{Deterministic Optimal Control (DOC)}
The simplest formulation is the deterministic case obtained when
$\sigma = 0$.  In this case, the SDE~\eqref{eq:dyn} reduces to an
ordinary differential equation and $X$ and $U$ are both deterministic
processes. 
The deterministic optimal control objective is as follows:
\[
J_T^{\text{\tiny DOC}}(U)  := \int_0^T \half \left(|c(X_t)|^2 + |U_t|^2_{\RU}\right) \ud t + \, \mathcal{G}(X_T) 
\]
where $c,\mathcal{G}$ are twice continuously differentiably real valued function, with $\mathcal{G}$ taking only non-negative value and $R$ is a symmetric strictly positive definite matrix.  

In a prior work~\cite{anant-2022}, a simulation-based algorithm for
approximation of the optimal control law is described.  The
algorithm is referred to as the dual ensemble Kalman filter or dual
EnKF for short. 
The goal of the present paper is to generalize and extend the dual EnKF to the stochastic optimal control problem for the system~\eqref{eq:dyn} when
$\sigma \neq 0$.  For this purpose, the following types of cost structures are considered.  

\subsection{Stochastic Optimal Control (SOC)}

The stochastic optimal control objective is as follows:
\[
J_T^{\text{\tiny SOC}}(U) := \mathbb{E} \left[ J_T^{\text{\tiny DOC}}(U)  \right]
\]
where randomness enters due to the Brownian motion in \eqref{eq:dyn}, and as noted earlier, we require the control to be adapted to the filtration generated by the Brownian motion. 

%

\subsection{Risk Sensitive Control (RSC)} 
\label{sec:rsc}
For risk sensitive control, the following objective is of interest \cite[Equation (1)]{nagai-2013}:
\[
J_T^{\text{\tiny RSC}}(U) = \theta^{-1}\log\mathbb{E} \left[ \exp{ ( \theta J_T^{\text{\tiny
        DOC}}(U) )} \right]
\]
where $\theta \in \Re \setminus \{0\} $ is referred to as the risk sensitive parameter.  The case $\theta > 0$ is known as risk averse and $\theta < 0$ as risk seeking. A rigorous treatment of this problem, along with its motivations, appear in \cite{nagai-2013}, \cite[Section 8]{fleming-chap6}. We note that the cost is always non-negative for every $\theta \ne 0$ because  $ \exp{ ( \theta J_T^{\text{\tiny
			DOC}}(U) )}>1$ if $\theta>0$ and $ \exp{ ( \theta J_T^{\text{\tiny
			DOC}}(U) )}<1$ when $\theta<0$ for all $U$.  

\subsection{Linear Quadratic (LQ) Control}

Suppose the system \eqref{eq:dyn} has linear time invariant dynamics (LTI), that is, $a(z) = Az$, $b(z)u = Bu$ and $\sigma(z) = \sigma$, and the cost has a quadratic structure, that is, $|c(z)|^2 = |Cz|^2$ and $\mathcal{G}(z) = |z|_{G}^2$ for matrices $A,B,\sigma,C,G$ where $G\tp = G \succ 0$. 

\begin{assumption}\label{assn:lq}
$(A,B)$ is controllable, $(A,C)$ is observable, and $B\RUinv B\tp - \theta\sigma\sigma\tp \succ 0$.
\end{assumption}
The last assumption is needed to ensure positive definiteness of the solution to the Riccati equation, which is introduced in subsequent sections  \cite[Equation (90)]{jacobson-1973}.
In this case SOC is called linear quadratic Gaussian (LQG) and RSC is called linear quadratic exponential Gaussian (LEQG). 

\subsection{Literature survey}

\textbf{SOC:} SOC is a standard and well studied problem in stochastic control theory \cite{astrom,bensoussan, doyle-1978}. Two approaches to solve it are using the stochastic maximum principle, and stochastic HJB equation \cite{bensoussan}. 


\textbf{RSC:} The RSC has a long history going back to early works like \cite{jacobson-1973} and \cite{whittle-1981}. A more recent introduction to LEQG can be found in \cite{nagai-2013}, \cite{fleming-chap6},\cite{fleming-2006}, and for some recent surveys on the topic, see \cite{whittle-2002}, \cite{basar-2021-survey}, \cite{borkar-2023-survey}. A stochastic maximum principle for LEQG was established in \cite{lim-2003}, and \cite{duncan-2013} solved the problem using fundamental ideas like completion of squares and Girsanov theorem. The connection of LEQG to games is also a well studied area, see for example,  \cite{jacobson-1973}, \cite{fleming-2006}, in which the connection to linear quadratic zero sum differential games is demonstrated, and \cite{james-1992} for how non-linear exponential cost problem relate to differential games.


\section{Proposed methodology}
\label{sec:main-results}

\begin{table}
\centering
\begin{tabular}{@{}cc@{}} 
& $\hjb(v)$ \\ \midrule
SOC & $\half |c|^2 + (\grad v)\tp a - \half (\grad v)\tp D \grad v + \half \tr(\Sigma \grad^2 v)$ \\ \midrule
RSC & $\half |c|^2 + (\grad v)\tp a - \frac{1}{2}(\grad v)\tp (D - \theta\Sigma) \grad v + \frac{1}{2} \tr(\Sigma \grad^2 v)$  \\ \bottomrule
\end{tabular}
\caption{The right-hand-side of the HJB equation~\eqref{eq:HJB} for the SOC and RSC problems. The notations $D \coloneqq bR\inv b\tp$ and $\Sigma \coloneqq \sigma\sigma\tp$.}
\label{tb:valuefn}
\end{table}

 The proposed methodology is presented in three subsections. In Sec.~\ref{sec:valuefn}, we relate the value function for the optimal control problems to a probability density function. In Sec.~\ref{sec:mean-field}, we relate the probability density function  to a suitably designed stochastic process that can be simulated as an interacting particle system. Finally, in Sec.~\ref{sec:LQ-result}, we present the specialization to the LQ setting, followed by a Gaussian approximation procedure that can be applied to nonlinear setup. 

\subsection{Transforming value function to probability density}
\label{sec:valuefn}
Consider the SOC and RSC problems. Value function $\{v_t(x): 0 \leq t\leq
T, x\in\Re^d\}$ is defined as the optimal cost-to-go over the horizon $[t,T]$ when the state $X_t=x$. 
According to the dynamic programming principle, the value function satisfies the Hamilton Jacobi Bellman (HJB) partial differential equation (PDE)
\begin{align}\label{eq:HJB}
	-\frac{\partial v}{\partial t} = \hjb(v),\quad v_T= \mathcal G
\end{align} 
where $\mathcal H$ is given in Table \ref{tb:valuefn}~\cite[Equation (3)]{nagai-2013},\cite[Equation (11.2.5)]{bensoussan}.
The optimal control is obtained as a function of $v$ as 
\begin{equation}
	U_t^* = -\RU\inv b(X_t)\tp v_t(X_t).
\end{equation}
%
In this paper, our strategy is to introduce a bijection $\bij : \Re
\to \Re$ so that 
\begin{align}\label{eq:pt}
p_t (x) : = \frac{\bij(v_t(x))}{\int \bij(v_t(x)) \ud x}, \quad 0\leq t\leq T,  \;x \in \Re^d
\end{align}
has the meaning of a probability density function.  
The bijection $\bij$ is selected according to
\begin{align}
\bij(z) \coloneqq 
\begin{cases}
\exp(-z); & \text{SOC} \\
\exp(-|\theta| z); & \text{RSC}
\end{cases}
\label{eq:bij}
\end{align}
Our choice of $\bij$ is inspired from the log transform which is routinely used in risk sensitive control \cite{fleming-chap6}. Since the value function is always positive, $\bij$ has been appropriately adjusted so that the quantity inside the exponential is negative. We also note that our method is only intended for situations when $\psi(v_t(\cdot)) \in L^1(\R)$ for all $0 \le t \le T$.



\begin{table}
\centering
\begin{tabular}{@{}cc@{}} \toprule
Cost & ${\mathcal{D}}(\Lambda)$ \\ \midrule
LQG & {$A^\transpose \Lambda + \Lambda A + C^\transpose C - \Lambda B  \RUinv B^\transpose \Lambda$} \\ \midrule
LEQG & $ A^\transpose \Lambda + \Lambda A + C^\transpose C  - \Lambda(B  \RUinv B^\transpose - \theta \sigma\sigma^\transpose ) \Lambda$ \\ \bottomrule
\end{tabular}
\caption{Expression for DRE in \eqref{eq:DRE}}
\label{tb:dre}
\end{table}

\subsection{Mean-field process}\label{sec:mean-field}
The goal is to design a (mean-field) stochastic process that has the same density function as  $\{p_t:0\leq t\leq T\}$. 

\begin{table}
\centering
\begin{tabular}{@{}cc@{}} \toprule
Cost & ${\mathcal{D}^{\dagger}}(\Lambda)$ \\ \midrule
LQG & {$A\Lambda + \Lambda A^\transpose - B\RU^{-1}B^\transpose + \Lambda C^\transpose C \Lambda$} \\ \midrule
LEQG & $ A\Lambda + \Lambda A^\transpose - \frac{1}{|\theta|}(B\RU^{-1}B^\transpose - \theta\sigma\sigma\tp) + |\theta|\Lambda C^\transpose C \Lambda$ \\ \midrule
\end{tabular}
\caption{Expression for the Dual DRE in \eqref{eq:d-DRE}}
\label{tb:d-dre}
\end{table}

\medskip

Define a stochastic process $\Ybarbar=\{\Ybarbar_t\in \Re^d:
0\leq t\leq T\}$ as a solution of the
following backward (in time) SDE:
\begin{subequations}\label{eq:Ybar}
\begin{align}
	\ud \Ybarbar_t & = a(\Ybarbar_t)\ud t  +  b(\Ybarbar_t)\ud  \backward{\etabar}_t +  \sigma(\Ybarbar_t)\ud  \backward{W}_t \nonumber \\ 
	& \qquad +( \vf_t({Y}_t; \pbart) + \correct_t({Y}_t; \pbart)) \ud t      \\ 
	\Ybarbar_T & \sim \frac{\bij(\mathcal{G}(\cdot))}{\int \bij(\mathcal{G}(z)) \ud z }
\end{align}
\end{subequations}
where ${\etabar}=\{{\etabar}_t \in \Re^m:0\leq t\leq T\}$ is a B.M. 
with a suitably chosen covariance matrix, $\vf_t(\cdot;\cdot), \correct_t(\cdot;\cdot)$ is a suitably chosen vector field, 
and $\bar{p}_t$ is the density of $\Ybarbar$. Here we have written the mean field term $\mathcal{A}$ introduced earlier as a sum of $\vf + \correct$, and the reason for which will be made clear later in this exposition.
The meaning of the backward
arrow on $\ud  \backward{\etabar}, \ud \backward{W}$ in~\eqref{eq:Ybar} is that the SDE
is simulated backward in time starting from the terminal condition
specified at time $t=T$.  The reader is referred
to~\cite[Sec. 4.2]{nualart1988stochastic} for the definition of the
backward It\^{o}-integral. 
The mean-field process is useful because of the following proposition.

\begin{proposition}
\label{prop:Y-exactness}
Consider the mean-field process~\eqref{eq:Ybar}.
Suppose $\cov(\eta)$ is Selected according to Table \ref{tb:soln},    $\vf$ and $\correct$ satisfy the PDEs
\begin{subequations}\label{eq:Poisson}
\begin{align}
-\nabla \cdot(\pbart(\cdot) \vf_t(\cdot;\pbart) ) &= \pbart(h_t(\cdot)-\hat{h}_t), \label{eq:PoissonI} \\
-\nabla \cdot(\pbart(\cdot) \correct_t(\cdot;\pbart) ) &= \vfc_t(\cdot), \label{eq:PoissonC}
\end{align} 
\end{subequations}
where $\vfc_t(\cdot)$, $h$
also appear in Table \ref{tb:soln}, and  $\hat{h}_t \coloneqq \int \pbart(z) h_t(z) \ud z$. 
Then, 
\begin{equation}
\bar{p}_t = p_t,\quad \forall t \in [0,T], 
\end{equation}
where $\pbart$ is the probability density function of $\Ybarbar_t$ and $p_t$ is defined in~\eqref{eq:pt} in terms of the value function. The optimal control is expressed as a function of $\pbart$ according to
\begin{align*}
U_t^* = 
\begin{cases}
R\inv b(X_t)\tp \grad\log\bar{p}_t(X_t); & \text{SOC} \\
(|\theta|R)\inv b(X_t)\tp \grad\log\bar{p}_t(X_t); & \text{RSC} \\
\end{cases}
\label{eq:opt-con}
\end{align*}
\end{proposition}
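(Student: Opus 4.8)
The plan is to prove that $\pbart$ and $\pt$ are two solutions of the \emph{same} terminal-value PDE on $[0,T]$, to conclude $\pbart=\pt$ by uniqueness, and then to read off the control law by a one-line substitution.

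First I would write down the Kolmogorov equation of the mean-field process~\eqref{eq:Ybar}. Because \eqref{eq:Ybar} is integrated \emph{backward} in time against the backward It\^o differentials $\ud\backward\etabar,\ud\backward W$, the time-reversal $s\mapsto T-s$ turns it into an ordinary forward It\^o SDE whose drift is negated and whose diffusion coefficient is unchanged~\cite[Sec.~4.2]{nualart1988stochastic}. Applying the usual Fokker--Planck equation to that reversed process and changing variables back gives, with $Q\coloneqq b\,\cov(\etabar)\,b\tp+\sigma\sigma\tp$ the diffusion matrix jointly contributed by the independent exploration noise $\etabar$ and model noise $W$,
\begin{align*}
-\frac{\partial \pbart}{\partial t}
 =\grad\cdot\big((a+\vf_t(\cdot;\pbart)+\correct_t(\cdot;\pbart))\,\pbart\big)
  +\tfrac12\sum_{i,j}\partial_i\partial_j\big(Q_{ij}\,\pbart\big),
\end{align*}
with terminal data $\pbart|_{t=T}=\bij(\mathcal G)/\!\int\bij(\mathcal G(z))\,\ud z$, which is exactly $\pt|_{t=T}$ by \eqref{eq:pt}--\eqref{eq:bij}.

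Next I would derive the PDE satisfied by $\pt$. From \eqref{eq:pt}--\eqref{eq:bij}, $v_t=-\log \pt-\log Z_t$ for SOC and $v_t=-|\theta|\inv(\log \pt+\log Z_t)$ for RSC, where $Z_t\coloneqq\int\bij(v_t(z))\,\ud z$ is finite under the standing integrability assumption. Substituting $\grad v_t$, $\grad^2 v_t$ and $\partial_t v_t$ into the HJB equation~\eqref{eq:HJB} (with $\hjb$ from Table~\ref{tb:valuefn}), multiplying through by $\pt$, and using the identity $\sum_{i,j}\partial_i\partial_j(M_{ij}\rho)=\rho\,(\grad\log\rho)\tp M\grad\log\rho+\rho\,\tr(M\grad^2\log\rho)$ (constant symmetric $M$) to trade every second-order term for a divergence-form term plus lower-order pieces, one obtains a PDE for $\pt$. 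The claim I must verify is that it coincides term-by-term with the PDE of the previous paragraph once $\cov(\etabar)$, $\vf_t$ and $\correct_t$ are chosen as prescribed: $\cov(\etabar)$ is picked (Table~\ref{tb:soln}) so that $Q$ matches the second-order coefficient arising from $\tfrac12\tr(\Sigma\grad^2 v)$ and $-\tfrac12(\grad v)\tp(D-\theta\Sigma)\grad v$, with Assumption~\ref{assn:lq} making that matrix positive definite in the LQG/LEQG case; $\vf_t$ solves the weighted Poisson equation~\eqref{eq:PoissonI} with the scalar weight $h_t$ of Table~\ref{tb:soln}, which gathers the running cost $\tfrac12|c|^2$ and the first-order transport terms; and $\correct_t$ solves~\eqref{eq:PoissonC} with $\vfc_t$ of Table~\ref{tb:soln}, which absorbs exactly the part of the second-order HJB terms not already produced by the model-copy noise $\sigma(\cdot)\ud\backward W$ — the piece that has no analogue in the deterministic dual EnKF of~\cite{anant-2022}. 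A short integration-by-parts computation then shows $\partial_t\log Z_t=\int\pbart(z)h_t(z)\,\ud z=\hhat_t$ and $\int\vfc_t(z)\,\ud z=0$, which are precisely the solvability conditions for~\eqref{eq:PoissonI}--\eqref{eq:PoissonC}, so everything is consistent.

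Finally, since $\pt$ and $\pbart$ solve the same terminal-value (McKean--Vlasov) PDE with the same data at $t=T$, uniqueness gives $\pbart=\pt$ on $[0,T]$; the control formulas then follow from the HJB optimality condition $U_t^*=-R\inv b(X_t)\tp\grad v_t(X_t)$ together with $\grad v_t=-\grad\log\pt=-\grad\log\pbart$ (SOC) and $\grad v_t=-|\theta|\inv\grad\log\pbart$ (RSC). I expect the crux to be the second step: getting the signs right in the backward-It\^o Kolmogorov equation, and then carrying out the bookkeeping that matches the HJB's quadratic and Laplacian terms (with the $|\theta|$ factors and the extra $+\tfrac{\theta}{2}(\grad v)\tp\Sigma\grad v$ of the RSC Hamiltonian) against the interacting-system drift $\vf_t+\correct_t$ and diffusion $Q$, in such a way that the mean-zero right-hand sides needed for~\eqref{eq:PoissonI}--\eqref{eq:PoissonC} emerge automatically.
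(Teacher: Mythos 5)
Your proposal follows essentially the same route as the paper's proof: write the Fokker--Planck equation of the backward mean-field process, derive the PDE for $p_t$ from the HJB equation via the log transform (the paper records this in Table~\ref{tb:pt}), verify term-by-term that the prescribed $\cov(\eta)$, $\vf_t$, $\correct_t$ make the two PDEs and terminal conditions coincide, and read off the control from $\grad v_t \propto -\grad\log\pbart$. Your added remarks on the solvability conditions (the $\hat h_t$ centering and $\int\vfc_t=0$) correspond to the paper's normalization bookkeeping $\dot\beta_t=-\hat h_t$, so the argument is correct and not materially different.
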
 
\begin{proof}
See Appendix \ref{app:mf-proof}. 
\end{proof}
The significance of~\Prop{prop:Y-exactness} is that the optimal control policy $\phi_t(\cdot)$ can now be obtained in terms of the statistics of the random variable $\Ybarbar_t$. The PDEs written may not always be analytically tractable, in which case, one has to rely on numerical approximation techniques. We call $\vf$ as the interaction term, and $\correct$ as the correction term, since the latter accounts for non-constant system model. In other words, if $b,\sigma$ are constants, then 
$\correct$ becomes 0 for RSC with $\theta < 0$, and otherwise $\vfc$ simplifies to $\tr(\grad^2\pbart)$ for RSC with $\theta > 0$ and $\half\tr(\grad^2\pbart)$ for SOC.



\begin{table}
\centering
\begin{tabular}{@{}ccc@{}} \toprule
& $\vf_t^{\text{LQ}}(z; n_t, S_t)$ & $\correct_t^{\text{LQ}}(z; n_t, S_t)$ \\ \midrule
LQG& $\half S_t C\tp C(z + n_t) $ & $\half \Sigma S_t\inv (z - n_t)$  \\ \midrule
LEQG& \multirowcell{4}{$\frac{|\theta|}{2} S_t C\tp C(z + n_t)$} &  \multirowcell{2}{$\Sigma S_t\inv (z - n_t)$} \\ 
$\theta > 0$ & &  \\ \cmidrule{1-1} 
LEQG&  &  \multirowcell{2}{$0$}   \\ 
$\theta < 0$ & &  \\ \bottomrule
\end{tabular}
\caption{Vector fields for LQ case in \eqref{eq:EnKF-Yi}.}
\label{tb:soln-lq}
\end{table}

\begin{table*}
\centering
\begin{tabular}{@{}cccc@{}} \toprule
Cost & $\vfc(\cdot)$ & $h_t(\cdot)$ & $\cov(\eta)$ \\ \midrule
LQG & $\half \grad^2\cdot(\Sigma \pbart ) $ & $\half |c|^2 + \nabla \cdot a + \half \tr((D - \Sigma)\grad^2\log(\pbart))$  & {$\RUinv$}  \\ \midrule
LEQG & \multirowcell{2}{$ \grad\cdot(\pbart\grad\cdot (\Sigma - \frac{1}{\theta}D)) $} & \multirowcell{2}{$-\frac{\theta}{2} |c|^2 + \grad\cdot a + \half \grad^2 \cdot (\frac{1}{\theta}D - \Sigma) - \frac{1}{2\theta} \tr(D  \grad^2 \log \pbart)$} &  \multirowcell{2}{$(\sqrt{|\theta|}R)\inv$}  \\ 
$\theta < 0$ & & & \\ \midrule
LEQG & \multirowcell{2}{$ \grad^2\cdot(\pbart\Sigma) + \grad\cdot(\pbart \grad\cdot (\frac{1}{\theta}D - \Sigma) )$} & \multirowcell{2}{$\frac{\theta}{2} |c|^2 + \grad\cdot a - \grad^2\cdot(\frac{1}{\theta}D - \Sigma) + \half \tr((\frac{1}{\theta}D - 2\Sigma) \grad^2 \log \pbart) $} & \multirowcell{2}{$(\sqrt{|\theta|}R)\inv$}  \\ 
$\theta > 0$ & & & \\ \bottomrule
\end{tabular}
\caption{Details for the Poisson equation \eqref{eq:Poisson} in Proposition \ref{prop:Y-exactness}. }
\label{tb:soln}
\end{table*}



\subsection{LQ setting}\label{sec:LQ-result}
In this scenario, the value function is obtained as
\begin{align*}
v_t(x) = \half x\tp P_t x + g_t
\end{align*}
where $\{P_t:0\leq t\leq T\}$ is a matrix valued process which is the solution of the
backward (in time) {differential Riccati equation (DRE)}
\begin{equation}\label{eq:DRE}
-\frac{\ud}{\ud t} P_t = \mathcal{D}(P_t), \quad P_T = G
\end{equation}
where the expressions for the $\mathcal{D}(\cdot)$ are in Table \ref{tb:dre}, and 
\begin{align*}
-\dot{g}_t  = \tr(\sigma\sigma\tp P_t), \quad g_T = 0.
\end{align*} 
Under the Assumption \ref{assn:lq}, $P_t\succ 0$ for $0\leq t\le T$ whenever $G \succ
0$~\cite[Sec.~24]{brockett2015finite}, \cite[Equation (90)]{jacobson-1973}. Then density $p_t$ obtained from the value function then always takes the form $\normal(0,S_t)$, where
\begin{align*}
S_t := \begin{cases}
P_t\inv; \quad &\text{LQG} \\
(|\theta|P_t)^{-1}; \quad &\text{LEQG}
\end{cases}, 
\quad 0 \le t \le T.
\end{align*}
It is readily verified that $\{S_t:0\leq t\leq T\}$ also solves a
DRE 
\begin{equation}\label{eq:d-DRE}
-\frac{\ud}{\ud t} S_t = \mathcal{D}^{\dagger}(S_t),
\end{equation}
which represents the dual of~\eqref{eq:DRE},  and the expressions for $\mathcal{D}^{\dagger}$ appear in Table \ref{tb:d-dre}. The expressions for $\vf,\correct,\eta$ in this scenario are denoted by a superscript $^{\text{LQ}}$ and appear in Table \ref{tb:soln-lq}. The derivations of these expressions appear in Appendix \ref{app:mf-proof}.

The mean-field process is empirically approximated by simulating a
system of controlled interacting particles 
according to 
\begin{subequations}\label{eq:EnKF-Yi}
\begin{align}
&\ud {Y}^i_t = \underbrace{A {Y}^i_t \ud t + B\ud \backward{\eta}^i_t + \sigma \ud\backward{W}^i_t }_{\text{i-th copy of model}~\eqref{eq:dyn}} \nonumber\\ &+ 
\underbrace{\vf_t^{\text{LQ}}({Y}^i_t; n_t^{(N)}, S_t^{(N)}) + \correct_t^{\text{LQ}}({Y}^i_t; n_t^{(N)}, S_t^{(N)}) \ud t}_{\text{data assimilation process}} \\
&{Y}^i_T  \stackrel{\text{i.i.d}}{\sim} \mathcal
N(0,S_T),\quad 1\leq i\leq N, 
\end{align}
\end{subequations}
${\eta}^i$ and $W^i$ are an i.i.d copy of ${\etabar}$ and $W$ respectively,  $n^{(N)}_t := N^{-1}\sum_{i=1}^N
{Y}^i_t$, and 
\begin{align}\SN_t := \frac{1}{N-1}\sum_{i=1}^N
  ({Y}^i_t-n^{(N)}_t)({Y}^i_t-n^{(N)}_t)^\top \label{eq:SbarN}
\end{align}
The data assimilation process serves to couple the particles.  Without it, the particles are independent of each other. 
The finite-$N$ system~\eqref{eq:EnKF-Yi} is referred to as the {\em dual EnKF}.  
 
 \noindent \textbf{Optimal control:} Set $
\Ybar^i_t  := (\SN_t)^{-1} ({Y}^i_t  - n^{(N)}_t)$.  Define
\begin{align}
\tilde{K}^{(N)}_t \coloneqq
\begin{cases}
\dfrac{1}{N-1} \sum_{i=1}^N \Ybar^i_{t} (\Ybar^i_{t})^\transpose, &\, \, \text{LQG}  \\ \noalign{\vskip5pt}
\dfrac{1}{(N-1)|\theta|} \sum_{i=1}^N \Ybar^i_{t} (\Ybar^i_{t})^\transpose, &\, \,  \text{LEQG}  \\
\end{cases}
\label{eq:KtildeN}
\end{align}

We consider two cases: 
\begin{enumerate}
\item[(i)] If
the matrix $B$ is explicitly known then
\begin{equation}
K_t^{(N)}  = - \RU^{-1} B^\transpose\tilde{K}^{(N)}_t
\label{eq:KtN}
\end{equation}
from which the optimal control policy is approximated as
$
\phi_t^{(N)}(x) = K_t^{(N)}x.
$
\item[(ii)] If $B$ is unknown, define the Hamiltonian 
\begin{align*}
 H^{(N)}(x,\alpha,t) &:=
 \underbrace{\half
|Cx|^2 \quad + \half \alpha^\transpose R \alpha}_{\text{cost function}} \nonumber \\ & \quad +
 x^\transpose \tilde{K}^{(N)}_t \underbrace{(Ax + B\alpha)}_{\text{model}~\eqref{eq:dyn}} \label{eq:hamN}
\end{align*}
from which the
  optimal control policy is approximated as
\[
\phi_t^{(N)}(x) = \argmin_{a\in\Re^m} H^{(N)}(x,a,t)
\]
We obtain $(Ax + B\alpha)$ by
averaging the model \eqref{eq:dyn} 
$N_s$ many times. 
The error in estimation would scale as $1/N_s$ from the strong law of large numbers. There are several zeroth-order approaches to solve the minimization problem, e.g., by
constructing 2-point estimators for the
gradient.  Since the
objective function is quadratic and the matrix $R$ is known, $m$ queries
of $H^{(N)}(x,\cdot,t)$ are sufficient to compute $\fee_t^{
  (N)}(x)$. 
\end{enumerate}
The interacting particle system \eqref{eq:EnKF-Yi} is simulated using Euler-Maruyama discretization scheme, where the direction of time is reversed. The discretization scheme is similar to the dual EnKF algorithm that appears in \cite{anant-2022}.

The correction and interaction terms are simplified under certain assumption about the model, as described in the following result. 
\begin{proposition}
\label{prop:Bsigma}
Consider the mean-field process in the LQ setting. 
\begin{enumerate}
\item For LQG, if $BR\inv B\tp = \sigma\sigma\tp + B\tilde{R}B\tp$ for some $\tilde{R} \succeq 0$, then set $\cov({\eta}) = \tilde{R}$ and $\correct_t \equiv 0$. In particular, if $\sigma = BR^{-\half}$, then $\eta \equiv 0$.
\item For LEQG with $\theta > 0$, if $BR\inv B\tp = 2\theta\sigma\sigma\tp + \theta B\tilde{R}B\tp$ for some $\tilde{R} \succeq 0$, then set $\cov({\eta}) = \tilde{R}$ and $\correct_t \equiv 0$. In particular, if $\sigma = B(2\theta R)^{-\half}$, then $\eta \equiv 0$.
\end{enumerate}
\end{proposition}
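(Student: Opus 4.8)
The plan is to verify that the \emph{modified} mean-field process --- obtained from~\eqref{eq:EnKF-Yi} (in the mean-field limit) by replacing $\cov(\eta)$ with $\tilde R$ and setting $\correct_t\equiv 0$ --- obeys the \emph{same} Fokker--Planck equation, and has the same terminal condition, as the original mean-field process of \Prop{prop:Y-exactness}. Since that proposition already identifies the density of the original process as $p_t$, uniqueness of the solution (here just the linear ODE / matrix Riccati system for the Gaussian mean and covariance) then forces $\pbart=p_t$ for the modified process as well, and the optimal-control formula --- a functional of $\pbart$ alone --- is unchanged. The point of departure is that in the LQ setting $B$ and $\sigma$ are \emph{constant} matrices, so the backward It\^o integrals in~\eqref{eq:EnKF-Yi} carry no Stratonovich correction, $\pbart=\normal(n_t,S_t)$ is Gaussian, and the exploration covariance $B\cov(\eta)B\tp$, the model-noise covariance $\Sigma\coloneqq\sigma\sigma\tp$, and the correction term $\correct_t$ all enter the Fokker--Planck equation only through a single effective second-order differential operator $\grad^2\!\cdot\!(M\pbart)$.

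The first step is to identify the matrix $M$. The two noise channels in~\eqref{eq:EnKF-Yi} contribute a second-order term with coefficient $B\cov(\eta)B\tp+\Sigma$. Since $\sigma$ is constant, the source $\vfc_t$ of the Poisson equation~\eqref{eq:PoissonC} degenerates to a \emph{pure} second-order term: $\vfc_t=\half\grad^2\!\cdot\!(\Sigma\pbart)$ for LQG, and (the second summand of $\vfc_t$ in Table~\ref{tb:soln} vanishing because $\frac{1}{\theta}D-\Sigma$ is constant) $\vfc_t=\grad^2\!\cdot\!(\Sigma\pbart)$ for LEQG with $\theta>0$. Hence, through~\eqref{eq:PoissonC}, the correction drift $\correct_t$ feeds back into the Fokker--Planck equation precisely as $+\vfc_t$, i.e.\ as a second-order term of opposite sign to the noise term, removing from the effective coefficient $M$ the matrix $\Sigma$ in the LQG case and $2\Sigma$ in the LEQG ($\theta>0$) case. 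Finally one checks that nothing else in the equation depends on $\cov(\eta)$, $\Sigma$ or $\correct_t$: the drift from $a(z)=Az$ does not, and --- crucially --- neither does the interaction term $\vf_t$, because for Gaussian $\pbart$ one has $\grad^2\log\pbart\equiv -S_t\inv$, a spatially constant matrix, so the $D-\Sigma$ (resp.\ $\frac{1}{\theta}D-\Sigma$) appearing in $h_t$ in Table~\ref{tb:soln} contributes only a constant, which cancels in $h_t-\hat{h}_t$; the right-hand side of~\eqref{eq:PoissonI}, and therefore $\vf_t$ itself (Table~\ref{tb:soln-lq}), is thus the same for both processes.

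Combining these, the effective coefficient for the \emph{original} process is $(B\cov(\eta)B\tp+\Sigma)-\Sigma=B\cov(\eta)B\tp$ for LQG --- equal to $B\RUinv B\tp$ upon substituting $\cov(\eta)=\RUinv$ from Table~\ref{tb:soln} --- and $(B\cov(\eta)B\tp+\Sigma)-2\Sigma=B\cov(\eta)B\tp-\Sigma$ for LEQG with $\theta>0$; for the \emph{modified} process ($\correct_t\equiv 0$) it is simply $B\tilde R B\tp+\Sigma$. Equating the coefficients yields, for LQG, $B\tilde R B\tp+\Sigma=B\RUinv B\tp$, which is exactly the hypothesis $B\RUinv B\tp=\sigma\sigma\tp+B\tilde R B\tp$; and, for LEQG with $\theta>0$, $B\tilde R B\tp+\Sigma=B\cov(\eta)B\tp-\Sigma$, which on inserting the Table~\ref{tb:soln} value of $\cov(\eta)$ and rearranging is exactly $B\RUinv B\tp=2\theta\sigma\sigma\tp+\theta B\tilde R B\tp$. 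Because $\tilde R\succeq 0$, $\cov(\eta)=\tilde R$ is an admissible noise covariance, so the modified process is well defined; all remaining terms and the terminal condition are unchanged, so $\pbart=p_t$ for all $t\in[0,T]$ and the optimal control is as in \Prop{prop:Y-exactness}. The two ``in particular'' claims are immediate specializations: $\sigma=BR^{-\half}$ gives $\sigma\sigma\tp=B\RUinv B\tp$, forcing $B\tilde R B\tp=0$, so the admissible choice $\tilde R=0$ gives $\eta\equiv 0$; likewise $\sigma=B(2\theta R)^{-\half}$ gives $2\theta\sigma\sigma\tp=B\RUinv B\tp$, again forcing $\tilde R=0$. (The case LEQG with $\theta<0$ is omitted because there $\correct_t\equiv 0$ already, by Table~\ref{tb:soln-lq}, so there is nothing to simplify.)

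The main obstacle is bookkeeping rather than analysis: one must carry the sign conventions of the backward It\^o / Fokker--Planck calculus through consistently with the appendix proof of \Prop{prop:Y-exactness}, and verify carefully both that $\vfc_t$ genuinely collapses to a pure second-order term when $b$ and $\sigma$ are constant and that the noise and correction contributions combine with the coefficients claimed above. Once that accounting is fixed, each of the two matrix identities is a one-line rearrangement of the corresponding hypothesis; the only genuine (if small) fact used is the invariance of $\vf_t$ under changes of $\cov(\eta)$, which rests entirely on $\grad^2\log\pbart$ being spatially constant for Gaussian $\pbart$.
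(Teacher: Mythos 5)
Your argument is correct and follows essentially the same route as the paper's proof: substitute $\correct_t\equiv 0$ and the new noise covariance into the Fokker--Planck equation \eqref{eq:FPE}, use the constancy of $B,\sigma$ (so that $\vfc_t$ is a pure second-order term and $\vf_t^{\text{LQ}}$ is unaffected), and match the effective second-order coefficients, which reduces to exactly the stated matrix identities. Your version is merely more explicit than the paper's one-line verification (and compares to the original mean-field FPE rather than directly to Table~\ref{tb:pt}, which is equivalent by \Prop{prop:Y-exactness}).
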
 
\begin{proof}
See Appendix \ref{app:mf-proof}.
\end{proof}

\begin{remark}
\label{rmk:model-free}
We make some observations on making our algorithm model-free. In the earlier work \cite{anant-2022}, emphasis was on designing algorithms which can be implemented without having access to the model parameters in \eqref{eq:dyn}, but with only access to model evaluations. 
The LEQG for $\theta < 0$ is model free, since the vector field $\vf^{\text{LQ}}$ does not involve any of the model parameters $A,B,\sigma$. 
Similarly, the situations considered in Proposition \ref{prop:Bsigma} can be implemented in a model free manner.
\end{remark}

\subsection{Gaussian Approximation}
For a numerical approximation of  the solution of the Poisson equations, we notice that the terms simplify in the following case:
\begin{enumerate}
\item $a(x)$ is conservative, i.e. $\nabla\cdot a(x) = 0$.  
\item If $b(x)=B$ then $\nabla \cdot D = 0$.
\item If $\sigma(x) = \sigma$ then $\nabla \cdot \Sigma = 0$.
\end{enumerate}    
Making these simplifications, 
and considering a Gaussian approximation for the density
$p_t$ we get 
\begin{align*}
h_t(x) = \frac{1}{2}|c(x)|^2 +
\text{(constant)}
\end{align*}  
which makes the solution of \eqref{eq:Poisson} simpler. 
This is useful to obtain a dual EnKF
algorithm:
\begin{align*}
&\ud {Y}^i_t = \underbrace{a ({ Y}^i_t)\ud t +  b ({ Y}^i_t) \ud \backward{\eta}^i_t + \sigma \ud\backward{W}^i_t }_{\text{i-th copy of model}~\eqref{eq:dyn}} \nonumber\\ 
& + 
\underbrace{\vf_t^{\text{GA}}({Y}^i_t; n_t^{(N)}, S_t^{(N)}) + \correct_t^{\text{LQ}}({Y}^i_t; n_t^{(N)}, S_t^{(N)}) \ud t}_{\text{data assimilation process}} \\
&{Y}^i_T  \stackrel{\text{i.i.d}}{\sim} \mathcal
N(0,S_T),\quad 1\leq i\leq N, 
\end{align*}
where the vector field $\correct^{\text{LQ}}$ is the same as the linear quadratic case (Table \ref{tb:soln-lq}), $\eta^i:=\{\eta_t^i \in \Re^m : i:0\leq t\leq T\}$ is an independent copy of $\etabar$,  and 
\begin{align*}
\vf_t^{\text{GA}}({Y}^i_t; n_t^{(N)}, S_t^{(N)}) := 
\begin{cases}
\frac{1}{2(N-1)}\tilde{\vf}; & \text{SOC} \\ \noalign{\vskip5pt}
\frac{1}{2|\theta|(N-1)}\tilde{\vf}; & \text{RSC}
\end{cases}
\end{align*}
where
\begin{align*}
\tilde{\vf} :=
& \sum_{j=1}^N  ({ Y}^j_t-n^{(N)}_t )(c({ Y}^j_t) - \hat{c}_t^{(N)}  )^\top (c(Y_t^i) + \hat{c}^{(N)})
\end{align*}
and $\hat{c}^{(N)}_{t} := N^{-1} \sum_{i=1}^N c(\Ybarbar^i_{t})$.
One may interpret the above as the dual counterpart of the FPF
algorithm with a constant gain approximation~\cite[Example 2]{yang-2016}. 

The optimal control may be approximated as earlier via the Hamiltonian (with $\tilde{K}_t^{(N)}$ as in \eqref{eq:KtildeN}),
\begin{align*}
H^{(N)}(x,\alpha,t) &:=  \half
|c(x)|^2 + \half \alpha\tp R \alpha \\ 
& \qquad +  x\tp\tilde{K}_t^{(N)}\underbrace{(a(x) + b(x)\alpha)}_{i-\text{th copy of model}~\eqref{eq:dyn}}  
\end{align*}
where as before $X_t^i \coloneqq (S^{(N)}_t)^{-1}(Y_t^i - n^{(N)}_t)$.  


\subsection{Comparison to literature}
\label{sec:comparison}

In recent years, there has been work on using particle based methods to approximate the solution of Fokker Planck equation \cite{maoutsa-2020},  and the solution of stochastic non-linear affine and quadratic in control problems \cite{maoutsa-2022}, \cite{reich-2024}. In \cite{maoutsa-2020}, authors express the Fokker Planck equation as a Liouville equation by incorporating the score function (defined as the gradient of log of density) in the dynamics of the original system. Then they adopt a variational representation for the score function and propose a particle based approach to estimate it. 

Closely related to our approach are \cite{maoutsa-2022}, \cite{reich-2024} which are also based on the same fundamental idea of turning value functions to probability density functions using the exponential transform. 
However, in \cite{maoutsa-2022} and \cite{reich-2024}, the density obtained from the value function $p$ is expressed as ratio of two densities $q$ and $\rho$, which is very much like smoothing \cite{jinkim-2020-md}. To be precise, they write $q_t(x) = \rho_t(x) p_t(x)$, where $\rho$ and $q$ propagates forward in time. The PDEs governing the two densities are as follows
\begin{align*}
\frac{\partial \rho_t(x)}{\partial t} &= -\grad\cdot(a\rho_t) + \tr(\Sigma\grad^2\rho_t) - \rho_tc \\
\frac{\partial q_t(x)}{\partial t} &= -\grad\cdot((a + bU_t^*)\rho_t ) + \tr(\Sigma\grad^2q_t)
\end{align*}
and then they replace $U_t^* = \RU\inv b\tp\grad\log p_t = \RU\inv b\tp(\grad\log q_t - \grad\log\rho_t)$.
The two densities $\rho$ and $q$ are each simulated as a coupled interacting particle system that involves approximation of the so-called score function, i.e. gradient of the logarithm of density. Although the PDEs for the densities they obtain are very similar to the PDEs that we get (see Table \ref{tb:pt}), a significant difference is that they need two separate interacting particle systems and their simulation for $q$ utilises the result of the simulation for $\rho$. 
In contrast, our proposed approach only involves one interacting particle system and is also applicable to risk sensitive case. Although we divide our solution into two parts $\vf$ and $\correct$, we do it for convenience of notation and comprehension, and they may very well be combined into a single vector field by summing them. 
Moreover, we lay emphasis on the model-free setting under certain scenarios as stated in Remark \ref{rmk:model-free} while their approach needs the model.

{The idea of the log transform employed in this paper builds on our previous work \cite{anant-2022}, in which we make detailed comparisons to the body of literature in nonlinear filtering theory. In a companion paper \cite{arxiv-leqg} we show error analysis with respect to the number of particles $N$ for the linear quadratic case wherein we quantify the error in estimating $S_t^{(N)}$, which changes as $1/N$. We also compare our approach to policy gradient type reinforcement learning approaches which solve the same problem.
We would like to emphasize here that we employ a simulator based approach, as opposed to a model identification approach based on estimating the system dynamics and solving the HJB equation. For the general nonlinear case, our work involves solving the Poisson equation \eqref{eq:Poisson}, see e.g. \cite{amir-2020-siam}.}  

\section{Numerics}
\label{sec:sim}

In this section we present results  from numerical experiments. We evaluate our algorithm on two examples: the inverted pendulum on cart system (Figure \ref{fig:ivp}) and a spring mass damper system with one mass (Figure \ref{fig:smd}). The models for both systems appear in \cite{anant-2022}.
We let LEQGP represent the case with $\theta > 0$ and LEQGN the case with $\theta < 0$.

For the inverted pendulum on cart, the objective is to stabilise the system at $\theta = \pi$ and $x = 0$ where $\theta$ is the angle the pendulum makes with the downward vertical direction, and $x$ is the displacement of the cart from a desired equilibrium. We use the Gaussian approximation method to implement the dual EnKF. algorithm We observe that all three controllers are able to stabilise the system reasonably well at the desired equilibria.

The spring mass damper, is a linear system on which we apply a quadratic cost, and we show convergence of the EnKF output to the solution of the respective ARE.  

\begin{figure}
\centering
\includegraphics[scale=0.25]{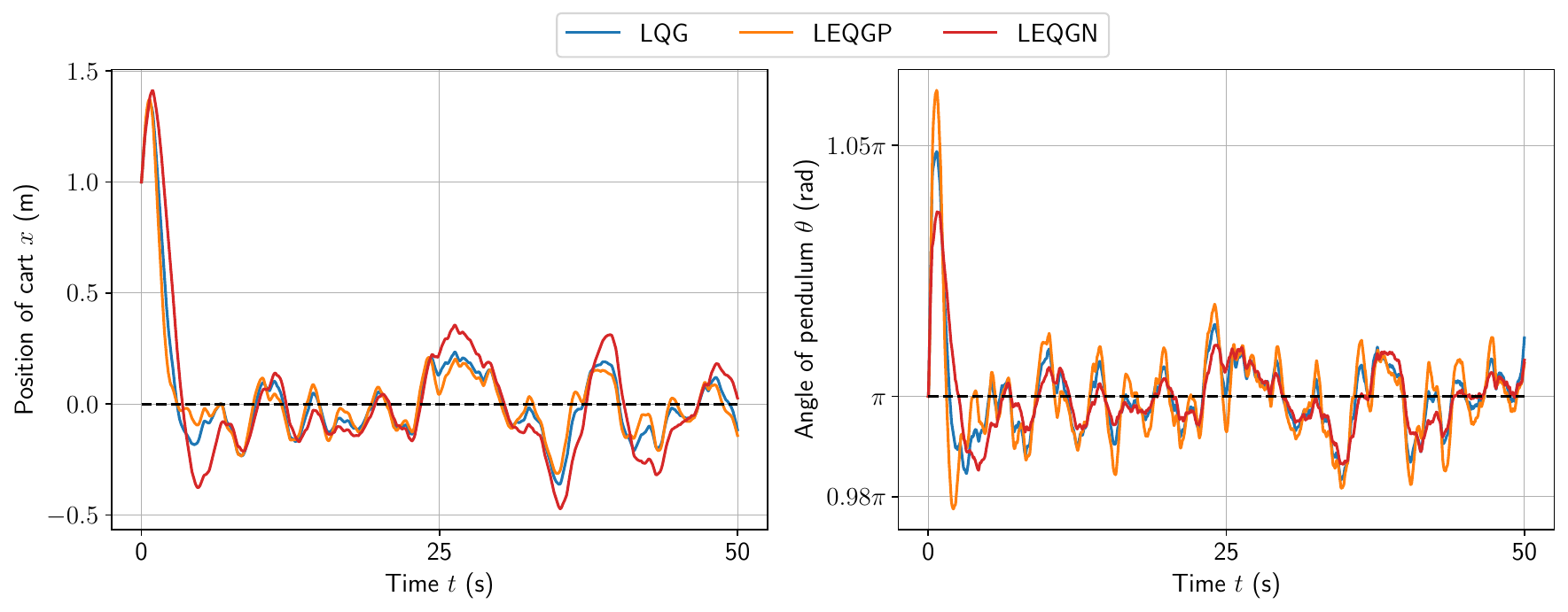}
\caption{Performance of all three algorithms on inverted pendulum on cart. The task is to stabilise the system state at $x=0$ and $\theta = \pi$. }
\label{fig:ivp}
\end{figure}

\begin{figure}
\centering
\includegraphics[scale=0.25]{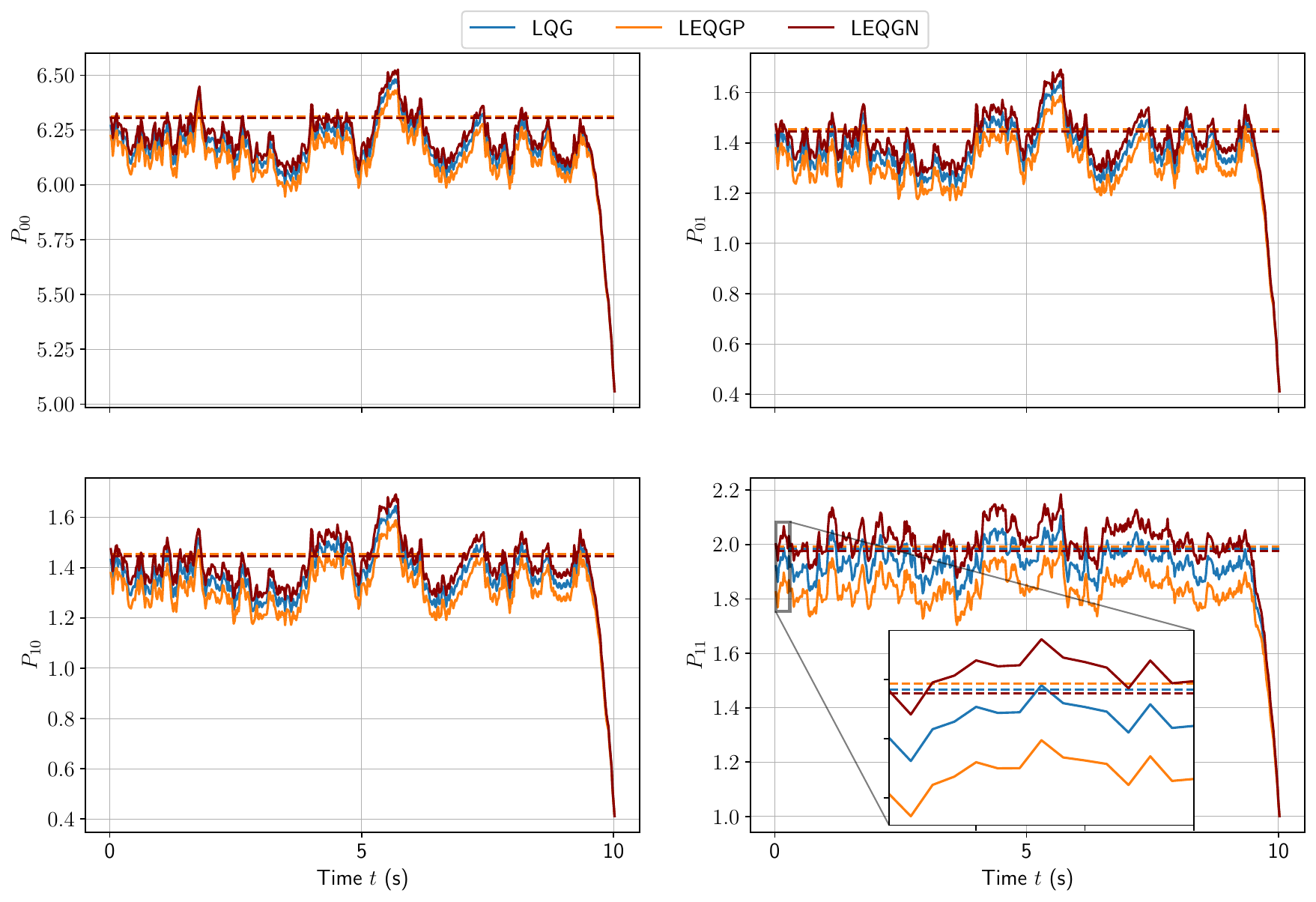}
\caption{Performance of all three algorithms on spring mass damper. The dashed lines represent the solutions of the respective AREs, and the solid lines are the solutions obtained by running the EnKF algorithm.}
\label{fig:smd}
\end{figure}


\bibliography{refs,litsur,literature}
\bibliographystyle{IEEEtran}

\appendices

\section{Proof of main results}
\label{app:mf-proof}

\begin{table*}[!b]
\centering
\begin{tabular}{@{}cp{10cm}@{}} \toprule
Cost & \hspace{4cm} $\frac{\partial \pt}{\partial t} $ \\ \midrule
\multirowcell{2}{SOC} & 
$=\pt(\dot{\beta}_t + \half |c|^2) -(\grad p)\tp a + \frac{1}{2\pt} \grad \pt \tp (\Sigma - D ) \grad \pt - \frac{1}{2}\tr(\Sigma \grad^2 \pt)$ \\
&\vspace{-5pt}$=\pt(h_t - \hat{h}_t) -\nabla \cdot (\pt a) - \frac{1}{2}\nabla^2 \cdot (D\pt) +\grad \cdot (\pt\grad\cdot D)$ \\
 \midrule
\multirowcell{2}{RSC \\ $\theta > 0$} & 
$= \pt(\dot{\beta}_t + \frac{\theta}{2} |c|^2) - (\grad \pt)\tp a + \frac{1}{2\pt}(\grad \pt)\tp (2\Sigma - \frac{1}{\theta}D ) \grad \pt - \frac{1}{2} \tr(\Sigma \grad^2 \pt)$ \\
& \vspace{-5pt}$= \pt(h_t -\hat{h}) - \nabla \cdot (\pt a) - \frac{1}{2} \grad^2\cdot(\pt(\frac{1}{\theta}D - \Sigma)) + \grad\cdot(\pt \grad\cdot (\frac{1}{\theta}D - \Sigma) )$ 
 \\ \midrule
\multirowcell{2}{RSC \\ $\theta < 0$} & 
$= \pt(\dot{\beta}_t -\frac{\theta}{2} |c|^2) - (\grad \pt)\tp a + \frac{1}{2\theta \pt}(\grad \pt)\tp D \grad \pt - \frac{1}{2} \tr(\Sigma \grad^2 \pt) $ \\
&\vspace{-5pt} $= \pt(h - \hat{h}) - \nabla \cdot (\pt a) - \frac{1}{2} \grad^2\cdot(\pt(\Sigma - \frac{1}{\theta}D)) + \grad\cdot(\pt\grad\cdot (\Sigma - \frac{1}{\theta}D)) $ \\ \bottomrule
\end{tabular}
\caption{PDE for density obtained from \eqref{eq:pt}}
\label{tb:pt}
\end{table*}

The PDE for $p_t(\cdot)$, defined in \eqref{eq:pt}, are expressed in Table \ref{tb:pt}.
Let $R_{\eta}$ denote the covariance of $\eta$, and define $D_{\eta}(z) \coloneqq b(z)R_{\eta}\inv b(z)\tp$. Then the Fokker Planck equation for the mean field system \eqref{eq:Ybar} reads
\begin{align} 
\frac{\partial \pbart}{\partial t} &= - \nabla \cdot (\pbart (a + \vf_t(\cdot;\pbart) + \correct_t(\cdot;\pbart))) \nonumber \\ & \qquad - \half \nabla^2 \cdot (\pt(D_{\eta} + \Sigma)).
\label{eq:FPE}
\end{align}

\textbf{Proof of Proposition \ref{prop:Y-exactness}:} 
Substituting the appropriate vector fields from Table \ref{tb:soln} into \eqref{eq:FPE}, we see that $\pt(\cdot)$ and $\pbart$ satisfy the same PDEs with the same terminal conditions. We used the following expressions to do the calculations
\begin{align*}
a\tp \grad \pbart &= \grad\cdot(\pbart a) - \pbart\grad\cdot a \\
\tr(\Sigma\grad^2 \pbart) &= \grad^2\cdot(\pbart\Sigma) - 2\grad\cdot(\pbart\grad\cdot \Sigma) + \pbart \grad^2 \cdot \Sigma \\
\grad^2\log \pbart &= \frac{1}{\pbart}\grad^2\pbart - \frac{1}{\pbart^2}\grad \pbart\grad \pbart\tp, \\
\beta_t &\coloneqq \int \pbart(z) \ud z, \quad
\hat{h} \coloneqq \int h(z) \pbart(z) \ud z,
\end{align*}
and since $\int \frac{\partial \pbart}{\partial t}\ud z = 0$ we have $\dot{\beta}_t = -\hat{h}$.

\textbf{Expressions for LQ case:}
Since the system is LTI, we have $\grad\cdot(\frac{1}{\theta}D - \Sigma) = 0$, and $\grad^2\cdot(\pbart\Sigma) = \tr(\Sigma\grad^2 \pbart)$. 
We postulate that $\pbart = \normal(n_t,S_t)$, with $n_t = 0$ for all $0 \le t \le T$ to verify that the expressions for $\vf_t,\correct_t$ in Table \ref{tb:soln-lq} satisfy \eqref{eq:Poisson}. Now for any trial vector field $\vftr_t(\cdot)$ we have
\begin{align*}
\nabla\cdot(\pbart \vftr_t) = \pbart\grad\cdot\vftr_t + \vftr_t\tp(\grad\pbart).
\end{align*}
To begin we observe $\grad \pbart = -\pbart S_t\inv (z - n_t)$, 
\begin{gather*}
\grad^2 \pbart = - \pbart S_t\inv + \pbart S_t\inv (z - n_t)(z - n_t)\tp S_t\inv, \\
\nabla \cdot(S_tC\tp C(z+n_t)) = S_tC\tp C \\
\nabla \cdot(\Sigma S_t\inv (z-n_t)) = \Sigma S_t\inv.
\end{gather*}
Moreover, since $\grad^2\log\pbart = S_t\inv$, and $\grad\cdot(Az) = \tr(A)$, we have that 
\begin{align*}
h_t(z) = 
\begin{cases}
\half |Cz|^2 \, + \, (\text{function of } t \text{ only}); & \text{LQG} \\ \noalign{\vskip5pt}
\frac{|\theta|}{2} |Cz|^2 \, + \, (\text{function of } t \text{ only}); & \text{LEQG} 
\end{cases}
\end{align*}
therefore, 
\begin{align*}
\hat{h}_t = 
\begin{cases}
\half \tr(C\tp C(S_t + n_t n_t\tp)); & \text{LQG} \\ \noalign{\vskip5pt}
\frac{|\theta|}{2}  \tr(C\tp C(S_t + n_t n_t\tp)); & \text{LEQG} 
\end{cases}
\end{align*}
Substituting into \eqref{eq:Poisson} we see that the equations are indeed satisfied. The Gaussian  hypothesis is also verified because the terminal condition of the mean-field process is Gaussian, and the equation for mean-field process becomes linear.   

\textbf{Proof of Proposition \ref{prop:Bsigma}:} substituting $\correct_t \equiv 0$ and the corresponding expression for $D_{\eta}$ in \eqref{eq:FPE}, and using the simplifications stated above for the LQ case, we see that the Fokker Planck equation \eqref{eq:FPE} matches the PDE for $p$ as given in Table \ref{tb:pt}.


\end{document}